\newproof{pf}{Proof}
\journal{Applied and Computational Harmonic Analysis}
\DeclareFontFamily{U}{mathx}{\hyphenchar\font45}
\DeclareFontShape{U}{mathx}{m}{n}{<-> mathx10}{}
\DeclareSymbolFont{mathx}{U}{mathx}{m}{n}
\DeclareMathAccent{\widebar}{0}{mathx}{"73}
\providecommand{\w}{{\omega}}
\providecommand{\Eq}{\:=\:}
\providecommand{\Leq}{\:\leq\:}
\providecommand{\Dwc}{\Delta_{\w_c}^2}
\providecommand{\Dwl}{\Delta_{\w_\ell}^2}
\providecommand{\Dwp}{\Delta_{\w_p}^2}
\providecommand{\Dn}{\Delta_n^2}
\providecommand{\etac}{{\eta_c}}
\providecommand{\etal}{{\eta_\ell}}
\providecommand{\etap}{{\eta_p}}
\providecommand{\Zspace}{\mathbb{Z}}
\providecommand{\norm}[1]{\left\lVert#1\right\rVert}
\renewcommand{\abs}[1]{\left\lvert#1\right\rvert}
\renewcommand*\env@matrix[1][\arraystretch]{%
  \edef\arraystretch{#1}%
  \hskip -\arraycolsep
  \let\@ifnextchar\new@ifnextchar
  \array{*\c@MaxMatrixCols c}}
\begin{document}
\begin{frontmatter}
\suppressfloats

\title{Sequences with Minimal Time-Frequency Uncertainty}
\tnotetext[t1]{This work was supported by ERC Advanced Grant--Support for Frontier Research--SPARSAM, Nr: 247006.}
\author{Reza Parhizkar\corref{cor1}}
\ead{reza.parhizkar@epfl.ch}

\author{Yann Barbotin}
\ead{yann.barbotin@epfl.ch}

\author{Martin Vetterli}
\ead{martin.vetterli@epfl.ch}

\cortext[cor1]{Corresponding author, Phone number: +41-21-693-5636}

\address{School of Computer and Communication Sciences\\
Ecole Polytechnique F\'{e}d\'{e}rale de Lausanne (EPFL), CH-1015 Lausanne, Switzerland}

\begin{abstract}
A central problem in signal processing and communications is to design signals
that are compact both in time and frequency. Heisenberg's uncertainty principle states that a
given function cannot be arbitrarily compact both in time and frequency, defining an
``uncertainty" lower bound. Taking the variance as a measure of localization in time and frequency, Gaussian functions reach this
bound for continuous-time signals. For sequences, however, this is not true; it is known that Heisenberg's bound is
generally unachievable. For a chosen frequency variance, we formulate the search for ``maximally compact
sequences'' as an exactly and efficiently solved  convex optimization problem, thus
providing a sharp uncertainty principle for sequences. Interestingly, the optimization formulation also reveals
that maximally compact sequences are derived from Mathieu's harmonic cosine
function of order zero. We further provide rational asymptotic
expansions of this sharp uncertainty bound. We use the derived bounds as a benchmark to compare the compactness of well-known window functions with that of the optimal Mathieu's functions. 
\end{abstract}

\begin{keyword}
Compact sequences \sep Heisenberg uncertainty principle \sep time
spread \sep filter design \sep Mathieu's functions \sep discrete
sequences \sep circular statistics \sep semi-definite relaxation
\end{keyword}
\end{frontmatter}
\section{Introduction}
\label{sec:introduction}

Suppose you are asked to design filters that are sharp in the frequency domain and at the same time compact in the time domain. The same problem is posed in designing sharp probing basis functions with compact frequency characteristics. In order to formulate these problems mathematically, we need to have a correct and universal definition of compactness and clarify what we mean by saying a signal is spread in time or frequency. 

These notions are well defined and established for continuous-time signals \cite{heisenberg1927,vetterli} and their properties are studied thoroughly in the literature. For such signals, we can define the time and frequency characteristics of a signal as in Table \ref{tab:TF-cont}. Note the connection of these definitions with the mean and variance of a probability distribution function $|x(t)|^2 /\norm{x}^2$. The value of $\Delta_t^2$ is considered as the spread of the signal in the time domain while $\Dwc$ represents its spread in the frequency domain. We say that a signal is compact in time (or frequency) if it has a small time (or frequency) spread. 

\begin{table*}[tb]
\centering
\begin{tabular}{@{}p{2cm}  p{6cm}  l@{}}
\toprule[1.2pt]
\textbf{domain} & \multicolumn{1}{l}{\textbf{center}} & \multicolumn{1}{l}{\textbf{spread}}  \\
\midrule
\addlinespace[8pt]
 \textbf{time} & $\mu_t = \frac{1}{\norm{x}^2}\int_{t\in \R}t |x(t)|^2 dt$ & $\Delta_t^2 = \frac{1}{\norm{x}^2}\int_{t\in \R} (t - \mu_t)^2|x(t)|^2 dt$\\\\
 \textbf{frequency}$\,\quad\,$ & $\mu_{\w_c} = \frac{1}{2\pi\norm{x}^2}\int_{\w\in \R} \w |X(\w)|^2 d\w\quad$ & $\Dwc = \frac{1}{2\pi\norm{x}^2}\int_{\w \in \R} (\w - \mu_\w)^2 |X(\w)|^2 d\w$\\\addlinespace[5pt]
  \bottomrule[1.2pt]
\end{tabular}
\caption{Time and frequency centers and spreads for a continuous time signal $x(t)$.}
\label{tab:TF-cont}
\end{table*}

The Heisenberg uncertainty principle \cite{heisenberg1927, robertson1929, Schrodinger1930} states that continuous-time
signals cannot be arbitrarily compact in both domains. Specifically,
for any $x(t) \in L^2(\R)$, 
\begin{equation}\label{eq:introhupc}
\etac = \Delta_t^2 \,\Dwc \geq \frac{1}{4}\,,
\end{equation}
where the lower bound is achieved for Gaussian signals of the form $x(t) = \gamma e^{-\alpha (t-t_0)^2+j\omega_0t}, \, \alpha >0$ \cite{gabor46}. The subscript $c$ stands for continuous-time definitions. We call $\etac$ the \textit{time-frequency spread} of $x$.

Although the continuous Heisenberg uncertainty principle is widely used in theory, in practice we often work with discrete-time signals (e.g.~filters and wavelets). Thus, equivalent definitions for discrete-time sequences are needed in signal processing. In the next section we study two common definitions of center and spread available in the literature. 

\subsection{Uncertainty principles for sequences}
\label{subsec:background}

An obvious and intuitive extension of the definitions in Table \ref{tab:TF-cont} for discrete-time signals is presented in Table \ref{tab:TF-disc-1}, where
\begin{equation}
X(e^{j\w}) = \sum_{k\in\Z} x_k e^{-j\w k}\, \quad \w \in \R\,,
\label{eq:DTFT_def}
\end{equation}
is the discrete-time Fourier transform (DTFT) of $x_n$.

\begin{table*}[tb]
\centering
\begin{tabular}{@{}p{2cm}  p{6cm}  l@{}}
\toprule[1.2pt]
\textbf{domain} & \multicolumn{1}{l}{\textbf{center}} & \multicolumn{1}{l}{\textbf{spread}}  \\
\midrule
\addlinespace[8pt]
 \textbf{time} & $\mu_n = \frac{1}{\norm{x}^2}\sum_{k\in\Z}k |x_k|^2$ & $\Dn = \frac{1}{\norm{x}^2}\sum_{k\in\Z} (k - \mu_n)^2|x_k|^2$\\\\
\textbf{frequency}$\,\quad\,$ & $\mu_{\w_{\ell}} = \frac{1}{2\pi\norm{x}^2}\int_{-\pi}^\pi \w |X(e^{j\w})|^2 d\w\quad$ & $\Dwl = \frac{1}{2\pi\norm{x}^2}\int_{-\pi}^\pi (\w - \mu_\w)^2 |X(e^{j\w})|^2 d\w$\\\addlinespace[5pt]
 \bottomrule[1.2pt]
\end{tabular}
\caption{Time and frequency centers and spreads for a discrete time signal $x_n$ as extensions of Table \ref{tab:TF-cont} \cite{vetterli}.}
\label{tab:TF-disc-1}
\end{table*}

Using the definitions in Table \ref{tab:TF-disc-1}  \cite{vetterli}, we
can also state the Heisenberg uncertainty principle for discrete-time
signals as 
\begin{equation}\label{eq:introhupl}
\etal = \Dn \,\Dwl > \frac{1}{4}\,,\quad  x_n \in \ell^2(\Z)\text{ with }X(e^{j\pi}) = 0\,,
\end{equation}

where the subscript $\ell$ stands for \emph{linear} in reference to the
definition of the frequency spread. Note the extra assumption on the Fourier transform of the signal in \eqref{eq:introhupl}. This assumption is necessary for the result to hold.

\begin{Example}
Take $x_n = \delta_n + 7 \delta_{n-1} + 2 \delta_{n-2}$. It is easy to verify that that $\abs{X(e^{j\pi})}=0.22\neq 0$, which violates the condition $X(e^{j\pi})=0$. The linear time-frequency spread of this signal according to Table \ref{tab:TF-disc-1} is $\etal = 0.159 < 1/4$. 
%
\end{Example}
In addition to the restriction on the Heisenberg uncertainty
principle, the definitions in Table \ref{tab:TF-disc-1} do not capture
the periodic nature of $X(e^{j\w})$ for the frequency center and
spread. In the search for more natural properties, we can adopt
definitions for circular moments widely used in quantum mechanics
\cite{Breitenberger1985} and directional statistics \cite{directionalMardia}.

\begin{definition}For a sequence $x_n, \, n\in\Z$, with a $2\pi$-periodic DTFT, $X(e^{j\w})$ as in \eqref{eq:DTFT_def}, 
the \textit{first trigonometric moment} is defined as \citep{Prestin1999, Prestin2003}
\begin{equation}
\begin{aligned}
\tau(x) & = \frac{1}{2\pi\norm{x}^2}\int_{-\pi}^{\pi} e^{j\w}|X(e^{j\w})|^2 d\w\\
&\overset{(a)}{=} \frac{1}{\norm{x}^2}\sum_{k\in \Z} x_k\, x^*_{k+1}\,,
\end{aligned}
\label{eq:tau}
\end{equation}
where $(a)$ follows from Parseval's equality. 
\end{definition}
The first trigonometric moment was originally defined for probability
distributions on a circle. With proper normalization, this definition
applies also to periodic functions.
 
\begin{definition}Using \eqref{eq:tau}, the \textit{periodic frequency spread} is defined as \cite{Breitenberger1985}:
\begin{equation}
\Dwp = \frac{1 - |\tau(x)|^2}{|\tau(x)|^2} = \left|\frac{\norm{x}^2}{\sum_{k\in\Z}x_k\, x_{k+1}^*}\right|^2-1\,,
\label{eq:deltawdisc}
\end{equation}
\end{definition}

where $\tau(x)$ is defined in \eqref{eq:tau}. This definition makes only sense when $\tau(x) \neq 0$. If $\tau(x) = 0$, we set $\Dwp = \infty$. Figure \ref{fig:dwp} illustrates pictorially how this definition corresponds to the first trigonometric moment of the periodic signal $X(e^{j\omega})$ (The figure is inspired from \cite{erb09}).

The definition of $\Dn$ remains unchanged as in Table \ref{tab:TF-disc-1}. These definitions are summarized in Table \ref{tab:TF-disc-2}. Using these definitions, Breitenberger \cite{Breitenberger1985} states the uncertainty relation for sequences as
\begin{equation}
\label{eq:hupdl}
\eta_p =  \Dn \ \Dwp \geq
\frac{1}{4}\ ,\quad \text{for }\|x\|_0>1\,.
\end{equation}
The condition $\norm{x}_0 > 1$ avoids the case $\Dn = 0$, which would happen for $x_n = \gamma \delta_{n-n_0}$.  
\begin{figure}[tb]
\centering
\includegraphics[width=0.6\textwidth]{./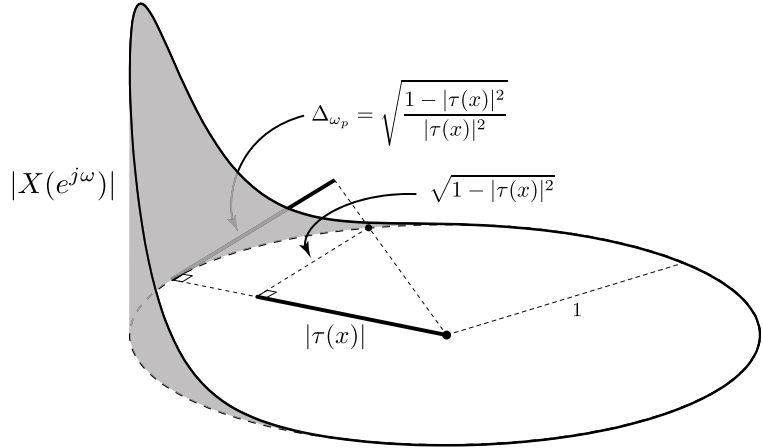}
\caption{$X(e^{j\omega})$ is a periodic function defined on the unit circle. The figure shows the correspondence between $\tau(x)$ and the periodic frequency spread of the signal.}
\label{fig:dwp}
\end{figure}

%
\begin{table*}[tb]
\centering
\begin{tabular}{@{}p{2.5cm}  p{5cm}  l@{}}
\toprule[1.2pt]
\textbf{domain} & \multicolumn{1}{l}{\textbf{center}} & \multicolumn{1}{l}{\textbf{spread}}  \\
\midrule
\addlinespace[8pt]
 \textbf{time} & $\mu_n = \frac{1}{\norm{x}^2}\sum_{k\in\Z}k |x_k|^2$ & $\Dn = \frac{1}{\norm{x}^2}\sum_{k\in\Z} (k - \mu_n)^2|x_k|^2$\\\\
\textbf{frequency}$\,\quad\,$ & $\mu_{\w_{p}} = 1-\tau(x)
\quad\quad$ & $\Dwp = \frac{1-\abs{\tau(x)}^2}{\abs{\tau(x)}^2} = \left|\frac{\norm{x}^2}{\sum_{k\in\Z}x_k\, x_{k+1}^*}\right|^2-1$\\\addlinespace[5pt]
\bottomrule[1.2pt]
\end{tabular}
\caption{Time and frequency centers and spreads for a discrete time
  signal $x_n$ using circular moments, where $\tau(x)$ is defined in \eqref{eq:tau}.}
\label{tab:TF-disc-2}
\end{table*}

\subsection{Contribution}
\label{subsec:contribution}

\begin{figure}[tb]
\centering
  %
\includegraphics[width = 0.9 \linewidth]{./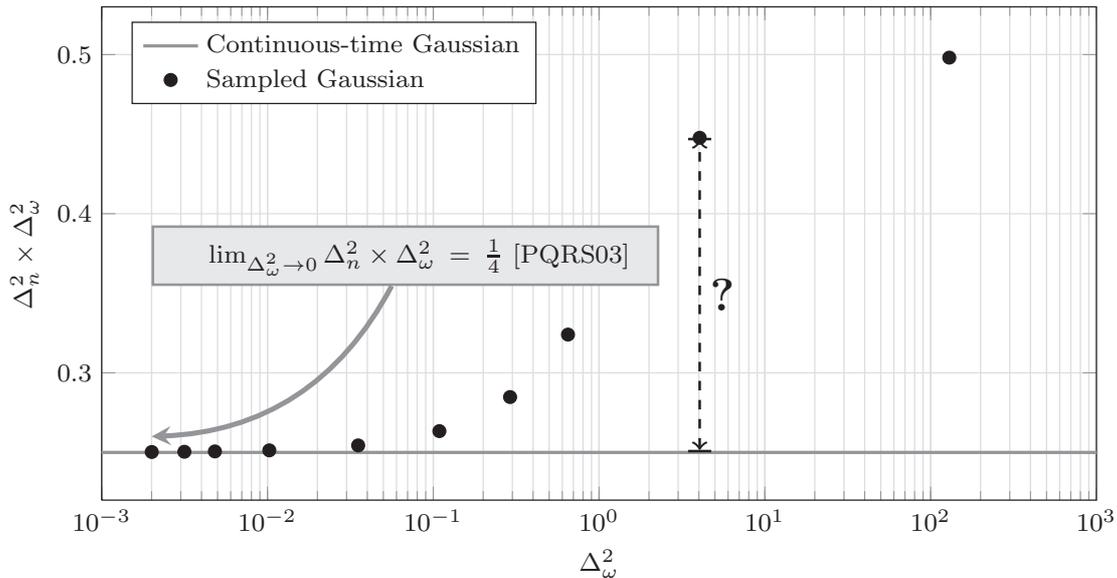}
\caption{\textbf{Time-frequency spread of continuous vs. sampled Gaussians.} The solid line shows the $1/4$ Heisenberg bound which is achieved by continuous Gaussian signals (with definitions in Table \ref{tab:TF-cont}), while the markers show the time-frequency spread (according to Table \ref{tab:TF-disc-2}) of sampled Gaussian sequences. The question is if the gap between the two curves is inherited from the properties of sequences or sampled Gaussians are not optimal in discrete domain. } 
\label{fig:cont-vs-disc-gaus}
  \end{figure}

In this paper, using the definitions in Table
\ref{tab:TF-disc-2}, we revisit the Heisenberg uncertainty principle for discrete-time signals. We address the fundamental yet unanswered question: If someone asks us to design a discrete filter
with a certain frequency spread ($\Dwp$ fixed), can we return the sequence
with minimal time spread $\Dn$? In other words, the problem is to find the solution to 
\begin{equation}
\label{eq:opt1}
\begin{aligned}
\Delta_{n,\text{opt}}^2 \ = \ &\underset{x_{n}}{\text{minimize}}
& & \Dn\\
& \text{subject to}
& &  \Dwp \ = \ \sigma^2 (\text{fixed})\,.
\end{aligned}
\end{equation}
In order to provide an insight on the uncertainty principle in the discrete-time domain, we do a simple test. In Figure \ref{fig:cont-vs-disc-gaus} we show the time-frequency spread of continuous Gaussian signals by the solid line on $1/4$. This is the Heisenberg's uncertainty bound which is achievable by continuous-time Gaussians. Further, using the definitions in Table \ref{tab:TF-disc-2}, we compute the time-frequency spread of sampled Gaussian sequences. According to Prestin et al.~\cite{Prestin2003}, the time-frequency spread tends to $1/4$ as the frequency spread of Gaussians decreases. However, when the frequency spread is large, the values are far from the uncertainty bound. Two questions arise here; is the $1/4$ uncertainty bound also tight for discrete sequences? and are sampled Gaussians the minimizers of the uncertainty in the discrete domain? Answers to these questions will be apparent if we can solve \eqref{eq:opt1}. 

\begin{definition}We call the solution of \eqref{eq:opt1} a \textit{maximally compact} sequence.
\end{definition}

Framing the design of maximally compact sequences as an optimization
problem, we show that contrary to the continuous case, it is not
possible to reach a constant time-frequency lower bound for arbitrary
time or frequency spreads. We further develop a simple optimization
framework to find maximally compact sequences in the time domain for a
given frequency spread. In other words, we provide in a constructive and numerical way,
a \emph{sharp uncertainty principle for sequences}, later shown in Figure~\ref{fig:bounds}. We also show that the Fourier spectra of maximally compact sequences are in fact a very special class of Mathieu's functions. Using the asymptotical expansion of these functions, we develop closed-form bounds on the time-frequency spread of maximally compact sequences.

\subsection{Related Work}
\label{subsec:related-work}
\begin{sloppypar}
The classic uncertainty principle \cite{heisenberg1927} assumes
continuous-time non-periodic signals. Several works in the signal processing community also address the discrete-time/discrete-frequency case \cite{Matolcsi1973, Donoho1989, Przebinda1999, Massar2008, Krahmer2008, Ghobber2011}. Our work bridges these two
cases by considering the discrete-time/continuous-frequency regime. 

Note that not all studies about the uncertainty principle concern the notion of spread. For example, the authors in \cite{Przebinda1999} propose the uncertainty bound on the information content of signals (entropy) and \cite{Donoho1989} provides a bound on the non-zero coefficients of discrete-time sequences and their discrete Fourier transforms. 
\end{sloppypar}
 The discrete-time/continuous-frequency scenario has been recently encountered in
 many practical applications in signal processing. Examples include uncertainty principle on graphs
 \cite{agaskar2012}, on spheres \cite{Khalid2012}, and on Riemannian manifolds \cite{erb09}. Studies on the periodic frequency
 spread can be found  in \cite{Breitenberger1985} and \cite{vonMises1918}. The most
 comprehensive work on the uncertainty relations for discrete sequences is found in \cite{Prestin2003}. The authors show 
that $1/4$ is a lower-bound on the time-frequency spread,
 which can only be achieved asymptotically as the sequence spreads in time. We provide sharp achievable bounds in the non-extreme case which match the results in \cite{Prestin2003} in the asymptotic regime.

This problem is similar to---although different than---the design of Slepian's Discrete Prolate Spheroidal Sequences (DPSS's). First introduced by Slepian in 1978 \cite{slepian78}, DPSS's are sequences designed to be both limited in the time and band-limited in the frequency domains. For a finite length, $N$ in time and a cut-off frequency $W$, the DPSS's are a collection of $N$ discrete-time sequences that are strictly band-limited to the digital frequency range $|f|<W$, yet highly concentrated in time to the index range $n = 0, 1, \cdots, N-1$. Such sequences can be found using an algorithm similar to the Papoulis-Gerchberg method. Note the difference of such sequences to the ones that we intend to design in our work; we do not impose any constraints on the bandwidth of the sequences in the frequency domain. Also, the ideas presented here are applicable both to finite and infinite length sequences. Moreover, we focus on the concentration of the signals in the time and frequency domain using the notion of variance. 

\section{Main Results}
\label{sec:main-results}
The following theorem is the core of the results presented in this paper. 

\begin{theorem}
\label{thm:SDR}
For finding unit norm maximally compact sequences, it is sufficient to solve the following semi-definite program (SDP)
\begin{equation}
\begin{aligned}
& \underset{\X}{\text{minimize}}
& & \text{tr}(\A\X) \\
& \text{subject to}
& & \text{tr}(\B\X)  = \alpha\\
&&& \text{tr}(\X) = 1,\ \ \X \succeq 0\,,
\end{aligned}
\label{eq:SDR}
\end{equation}
where $\alpha = \frac{1}{\sqrt{1+\sigma^2}}$ with $\sigma^2$ the fixed periodic frequency spread. Further, $\X^{\text{opt}}$, the solution to \eqref{eq:SDR} has rank one and $\X^{\text{opt}} = \x^{\text{opt}}\ {\x^{\text{opt}}}^T$, with $\x^{\text{opt}}$ the solution of  \eqref{eq:opt1}. Matrices $\A$ and $\B$ are defined as
\begin{equation}
\A = {\setlength{\arraycolsep}{3pt} \begin{bmatrix}
\ddots	&	 	&	 	&	 	&	 	&	 	&\bm{0}	 \\
 		&	2^2	&	 	&	 	&	 	& 		& 	\\
 		&	 	&	1^2	&	 	&	 	&	 	&	 \\
 		&	 	&	 	&	0	&	 	&	 	&	 \\
 		&	 	&	 	&	 	&	1^2	&	 	&	 \\
 		&		&	 	&	 	&	 	&	2^2	&	 \\
\bm{0}	&	 	&	 	&	 	&	 	&	 	&	\ddots
\end{bmatrix}},
\quad \quad
\B = \begin{bmatrix}[1.5]
\ddots	&	\ddots	&			&			&\bm{0}\\
\ddots	&	0	 	& \frac{1}{2}	&			&			\\
		&	\frac{1}{2}	&	0		&	\frac{1}{2}	&			\\
		&			&	\frac{1}{2}	&	0		&	\ddots	\\
\bm{0}&			&			&	\ddots		&	\ddots	
\end{bmatrix}\,.
\label{eq:AandB}
\end{equation}
\end{theorem}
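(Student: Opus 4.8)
The plan is to identify \eqref{eq:opt1} with a real quadratically constrained quadratic program (QCQP) having exactly two constraints that are linear in $\x\x^T$, and then to show that its semidefinite relaxation \eqref{eq:SDR} is tight. I would first dispose of the routine reductions. Both $\Dn$ and $\Dwp$ are invariant under $\x \mapsto \gamma\x$, so restrict to $\norm{\x}^2 = 1$; then, by \eqref{eq:deltawdisc}, the constraint $\Dwp = \sigma^2$ is the same as $\abs{\tau(\x)}^2 = (1+\sigma^2)^{-1}$, which fixes $\alpha = (1+\sigma^2)^{-1/2}$. Because $\Dn$ depends on $\x$ only through the moduli $\abs{x_k}$, while $\abs{\tau(\x)} \le \sum_k \abs{x_k}\,\abs{x_{k+1}}$, an optimal $\x$ may be replaced by its amplitude sequence, a real nonnegative sequence with the same $\Dn$ and with $\sum_k x_k x_{k+1} \ge \alpha$; since the objective rewards concentration, which lowers $\sum_k x_k x_{k+1}$, that inequality is active, and a modulation $x_k \mapsto (-1)^k x_k$ fixes the sign, giving $\sum_k x_k x_{k+1} = \alpha$. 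The one delicate reduction is that an optimal sequence may be taken \emph{symmetric about the origin} --- equivalently $\mu_n = 0$ --- so that $\Dn = \sum_k k^2 x_k^2 = \text{tr}(\A\x\x^T)$ holds with equality and not merely ``$\le$''; this should be extracted from the reflection invariance $x_k \mapsto x_{-k}$ of the problem together with the ground-state (Perron--Frobenius-type) structure of the stationarity equation, and it is confirmed a posteriori by the even Mathieu-cosine form of the minimiser. With these in hand, and using $\sum_k x_k x_{k+1} = \text{tr}(\B\x\x^T)$ and $\norm{\x}^2 = \text{tr}(\x\x^T)$, problem \eqref{eq:opt1} becomes precisely the rank-one case of \eqref{eq:SDR}.

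I would then pass to the relaxation by dropping the rank-one requirement, i.e. optimising $\text{tr}(\A\X)$ over all $\X \succeq 0$ with $\text{tr}(\X) = 1$ and $\text{tr}(\B\X) = \alpha$; this is \eqref{eq:SDR}, and being a relaxation its value lower-bounds that of the QCQP. The remaining, and main, task is to prove the relaxation \emph{tight}: that \eqref{eq:SDR} admits an optimal solution of rank one.

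The argument I have in mind is the extreme-point rank bound for spectrahedra. The feasible set of \eqref{eq:SDR} is the trace-one spectraplex $\{\X \succeq 0 : \text{tr}(\X) = 1\}$ cut by the single extra hyperplane $\text{tr}(\B\X) = \alpha$; it is convex and compact, so the linear objective attains its minimum at an extreme point $\X^{\text{opt}}$. By the Barvinok--Pataki bound, an extreme point of a spectrahedron defined by $m$ linear equalities has rank $r$ with $\binom{r+1}{2} \le m$; here $m = 2$, which forces $r \le 1$, while $\text{tr}(\X^{\text{opt}}) = 1$ rules out $r = 0$. Hence $\X^{\text{opt}} = \x^{\text{opt}}{\x^{\text{opt}}}^T$ for some real $\x^{\text{opt}}$, which is then feasible for the QCQP and attains the relaxed optimum, so the relaxation is tight; unwinding the reductions of the first paragraph, $\x^{\text{opt}}$ solves \eqref{eq:opt1}. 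For bi-infinite sequences one either runs this argument in $\ell^2$ directly, or truncates to length $N$ and lets $N \to \infty$.

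I expect the real difficulty to lie not in the relaxation step but in the first paragraph, specifically in the claim that an optimiser is centred ($\mu_n = 0$): the frequency constraint only controls $\abs{\tau}$, a discrete sequence carries merely a discrete amount of sign/phase freedom and can be translated only by integers, so $\mu_n = 0$ does not follow from a naive shifting argument and must come from a genuine symmetry or Perron--Frobenius consideration. By contrast, the tightness of \eqref{eq:SDR} is clean and essentially automatic, resting only on the fact that the program carries just two linear constraints.
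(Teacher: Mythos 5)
Your core argument coincides with the paper's: rewrite \eqref{eq:opt1} as a two-constraint QCQP in $\x\x^T$ with the matrices $\A$ and $\B$, drop the rank-one constraint to obtain \eqref{eq:SDR}, and invoke the Shapiro/Barvinok--Pataki extreme-point rank bound with $m=2$ equality constraints to conclude that an optimal $\X$ has rank one, so the relaxation is tight (this is exactly Lemma~\ref{lem:tight}, combined with feasibility as in Remark~\ref{rem:feasible}). The preliminary reduction to real nonnegative sequences with an active constraint $\sum_k x_k x_{k+1}=\alpha$ also parallels the paper's Lemma~\ref{lem:real} and the monotonicity argument of Lemma~\ref{lem:monotone}.

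The genuine gap is the centering step, which you correctly flag as delicate but do not actually prove, and your proposed fix would not close it. What must be shown is that $\min \sum_k k^2 x_k^2$ (the quantity the SDP computes) equals $\min \Dn$, i.e.\ that any feasible sequence can be replaced by a feasible one with $\mu_n=0$ and the \emph{same} time spread; since $\sum_k k^2 x_k^2=\Dn+\mu_n^2$, the inequality $\min\sum_k k^2x_k^2\geq\min\Dn$ is automatic, and the nontrivial direction is ``$\leq$''. Your a posteriori argument (reflection invariance plus the even Mathieu-cosine/ground-state structure of the stationarity equation) only shows that the \emph{SDP optimizer} is centered, which re-proves the automatic direction; it does not rule out a non-centered minimizer of $\Dn$ whose time spread is strictly smaller than the SDP value, and integer shifts only reduce $|\mu_n|$ to at most $1/2$, leaving a possible discrepancy of up to $\mu_n^2$. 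The paper closes this with Lemma~\ref{lem:shiftinv}: contrary to your claim that a shifting argument cannot work, one may shift by the \emph{non-integer} amount $\mu_n$ via sinc resampling; by Parseval this fractional delay multiplies $X(e^{j\w})$ by a unimodular phase on $(-\pi,\pi)$, hence preserves the norm, $|X(e^{j\w})|$ and therefore $\Dwp$, and preserves $\Dn$ while moving the time center to zero. With that lemma in place, your argument matches the paper's proof; without it (or an equivalent substitute), the identification of \eqref{eq:opt1} with the QCQP, and hence the theorem, is not established.
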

\begin{proof} See Section \ref{sec:proof-SDR}.\end{proof}

\begin{Remark} 
\label{rem:feasible}
The SDP \eqref{eq:SDR} is feasible; indeed we can always find a signal $\bm{x}$ with a certain frequency spread and norm one. Using this signal we can construct $\bm{X} = \bm{x}\bm{x}^T$ which shows the feasibility of this problem. 
\end{Remark}
The SDP in \eqref{eq:SDR} can be solved to an arbitrary precision by using 
existing approaches in the optimization literature; for example using
the \textsf{cvx} software package \cite{Grant2010}. This gives a constructive way to design sequences that are maximally compact in the time domain with a given frequency spread. 

\begin{Example}
\label{ex:SDP}
Take $\sigma^2 = 0.1$ to be the fixed and given frequency spread of the sequence. We can use \textsf{cvx} \cite{Grant2010} to solve the semi-definite program \eqref{eq:SDR} and find the optimal value of $\Dn = 2.62$. This results in the time-frequency spread of $\etap = 0.262$. The simple code in MATLAB is:
\begin{small}
\begin{verbatim}
cvx_begin
variable X(n,n);
minimize(trace(T*X))
subject to 
    trace(J0*X) == 1/sqrt(1+0.1)
    trace(X) == 1
    X == semi-definite(n)
cvx_end;
\end{verbatim}
\end{small}
Note that contrary to continuous-time signals, we cannot reach the $0.25$ lower bound for sequences. The resulting sequence and its DTFT are shown in Figure \ref{fig:example_SDP}. 
\begin{figure}[tb]
\centering
\includegraphics[width=\textwidth]{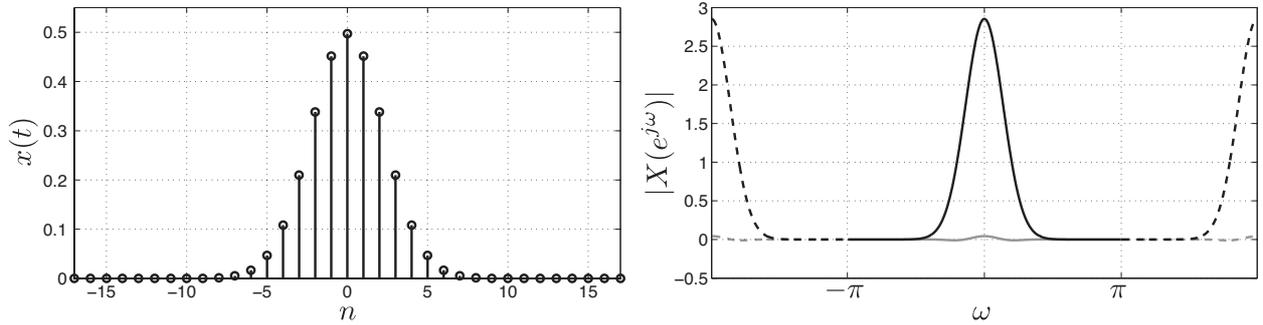}
\caption{\textbf{An example solution of \eqref{eq:SDR}.} The output of the SDP in \eqref{eq:SDR} with $\sigma^2 = 0.1$ using \textsf{cvx} in Example \ref{ex:SDP}. The gray curve in the frequency domain shows the difference from a fitted periodic Gaussian signal. The optimal value for $\Dn$ is found to be $2.62$ which results in a time-frequency spread of $\etap = 0.262$.}
\label{fig:example_SDP}
\end{figure}
\end{Example}

Note that the dual of the SDP \eqref{eq:SDR} is \cite[p. 265]{boyd2004}:
\begin{equation}
\begin{aligned}
& \underset{\lambda_1, \lambda_2}{\text{maximize}}
& & \alpha\,\lambda_1 + \lambda_2 \\
& \text{subject to}
& & \A - \lambda_1\,\B - \lambda_2\,\bm{I} \succeq 0\,.
\end{aligned}
\label{eq:dual}
\end{equation}
We will use the formulation of the dual problem many times in the rest of the paper. 

\begin{Lemma}
\label{lem:monotone}
For maximally compact sequences, $\Dn(x)$ changes monotonically with $\Dwp(x)$. 
\end{Lemma}
\begin{proof}
The feasible region of the dual \eqref{eq:dual} is shown in Figure \ref{fig:dual-feasible}. We can write \eqref{eq:dual} as
\begin{equation}
\begin{aligned}
& \underset{\lambda_1, \lambda_2}{\text{maximize}}
& & c\\
& \text{subject to}
& &\lambda_2 = c - \alpha\,\lambda_1\,,\\
& & & \A - \lambda_1\,\B - \lambda_2\,\bm{I} \succeq 0\,.
\end{aligned}
\label{eq:dual2}
\end{equation}
Note that $\alpha$ changes between $0$ and $1$ (see Figure \ref{fig:dual-feasible}). For a fixed $\alpha$, the maximum $c^{\text{opt}}$ is found by elevating the corresponding line $\lambda_2 = c-\alpha\lambda_1$ until it supports the feasible set (it is tangent to it). Since the feasible set is convex, as $\alpha$ grows (which means $\Dwp$ decreases), we need a higher elevation of the line to support the convex set, thus $c^{\text{opt}}$ (equivalently $\Dn$) increases. This phenomenon is later confirmed by the simulation results in Figure \ref{fig:bounds}.
\begin{figure}[tb]
\centering
\includegraphics[width=0.45\textwidth]{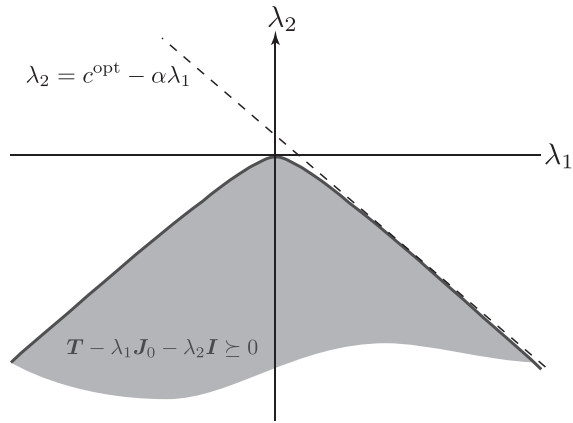}
\caption{\textbf{The feasible set of the dual problem \eqref{eq:dual} and the supporting line}. As $\alpha$ increases, we need to elevate the line more to support the feasible set, which means that the optimal value of $\Dn$ increases.}
\label{fig:dual-feasible}
\end{figure}
\end{proof}

Although Theorem \ref{thm:SDR} provides a constructive way
for finding maximally compact sequences, it does not specify the closed
form for these sequences. One would be interested to see if---in
analogy to continuous-time---sampled gaussians are maximally
compact? The answer is negative, as shown by the following theorem:

\begin{theorem}
\label{thm:mathieu}
The DTFT spectra, $X(e^{j\omega})$ of maximally compact sequences are Mathieu's functions. More specifically, 
\begin{equation}
  X(e^{j\w})\Eq \gamma_0\cdot \ce_0(-2\lambda_1 \:;\:(\w-\w_0)/2)e^{j\mu\w}\ ,
\end{equation}
where $|\gamma_0|=\|\ce_0(-2\lambda_1\:;\:(\w-\w_0)/2)\|^{-1}$,
$\w_0$ and $\mu$ are shifts in frequency or time and $\lambda_1$ is the
optimal solution of the dual problem \eqref{eq:dual}. $ \ce_0(q\ ; \w)$
is Mathieu's  harmonic cosine function of order zero. 
\end{theorem}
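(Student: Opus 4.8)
The plan is to read the optimality conditions of the SDP \eqref{eq:SDR} in the frequency domain. Since \eqref{eq:SDR} is strictly feasible (Remark~\ref{rem:feasible}; a positive-definite feasible point is obtained by a small perturbation), Slater's condition holds, so strong duality between \eqref{eq:SDR} and \eqref{eq:dual} holds, together with complementary slackness. Writing $\x^{\text{opt}}$ for the rank-one primal optimum (Theorem~\ref{thm:SDR}) and $(\lambda_1,\lambda_2)$ for the optimal dual variables, complementary slackness gives $\bigl(\A-\lambda_1\B-\lambda_2\bm I\bigr)\x^{\text{opt}}=\bm 0$, while dual feasibility $\A-\lambda_1\B-\lambda_2\bm I\succeq 0$ then forces $\lambda_2$ to be the \emph{smallest} eigenvalue of $\A-\lambda_1\B$ (this operator has compact resolvent, being a bounded perturbation of the diagonal $\A$, so the minimum is attained). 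In other words, $\x^{\text{opt}}$ is a ground state of the self-adjoint operator $\A-\lambda_1\B$ on $\ell^2(\Z)$; this is the identity I would establish first.

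Next I would push this eigenvalue equation through the DTFT. With $\A,\B$ as in \eqref{eq:AandB}, the relation $(\A-\lambda_1\B)\x^{\text{opt}}=\lambda_2\x^{\text{opt}}$ reads, componentwise, $n^2 x_n-\frac{\lambda_1}{2}(x_{n-1}+x_{n+1})=\lambda_2 x_n$. Multiplying by $e^{-j\w n}$ and summing over $n\in\Z$, and using $\sum_n n^2 x_n e^{-j\w n}=-\frac{d^2}{d\w^2}X(e^{j\w})$ together with $\sum_n(x_{n-1}+x_{n+1})e^{-j\w n}=2\cos\w\,X(e^{j\w})$, one obtains
\begin{equation}
\frac{d^2}{d\w^2}X(e^{j\w})+\bigl(\lambda_2+\lambda_1\cos\w\bigr)X(e^{j\w})=0\,.
\end{equation}
(These manipulations are legitimate since $\Dn<\infty$ and, a posteriori, $x_n$ decays super-exponentially.) The substitution $\w=2t$ turns this into the canonical Mathieu equation $y''+(a-2q\cos 2t)y=0$ with $a=4\lambda_2$ and $q=-2\lambda_1$, so $\w\mapsto X(e^{j\w})$ solves Mathieu's equation in the variable $\w/2$ with parameter $q=-2\lambda_1$. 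Accounting for the symmetries of \eqref{eq:opt1} — the modulation $x_n\mapsto x_n e^{j\w_0 n}$ and the time shift $x_n\mapsto x_{n-\mu}$, which leave both $\Dn$ and $\Dwp$ invariant (the latter because $|\tau(x)|$ is modulation invariant) — yields the claimed argument $(\w-\w_0)/2$ and linear phase $e^{j\mu\w}$, and the normalization $|\gamma_0|=\|\ce_0(-2\lambda_1;(\w-\w_0)/2)\|^{-1}$ is forced by $\|x\|=1$ via Parseval.

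It remains to identify \emph{which} Mathieu solution $X(e^{j\w})$ is, and this is the step I expect to be the main obstacle. Because $X(e^{j\w})$ is $2\pi$-periodic in $\w$, the solution is $\pi$-periodic in $t=\w/2$; the $\pi$-periodic solutions of Mathieu's equation are exactly $\ce_0(q;\cdot),\ce_2(q;\cdot),\ce_4(q;\cdot),\dots$ together with the odd ones of even index, and their characteristic values are linearly ordered with $a_0(q)$ the smallest of all. Since, from complementary slackness and dual feasibility, $a=4\lambda_2$ must be the smallest eigenvalue of $\A-\lambda_1\B$, equivalently the smallest Mathieu characteristic value $a_0(q)$, the solution can only be the order-zero cosine-elliptic $\ce_0(q;\cdot)$. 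Equivalently, one invokes the Perron--Frobenius/nodeless-ground-state principle for the Schr\"odinger-type operator $-\frac{d^2}{d\w^2}-\lambda_1\cos\w$ on the circle, whose lowest eigenfunction is the unique (up to scaling) nodeless periodic solution, namely $\ce_0$. The care needed here lies precisely in matching ``primal optimum $=$ ground state of $\A-\lambda_1\B$'' with ``lowest Mathieu characteristic value'', and in confirming that this ground state is nodeless irrespective of the sign of $\lambda_1$; once that is in place, the theorem follows.
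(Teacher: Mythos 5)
Your proposal is correct and follows essentially the same route as the paper: strong duality and complementary slackness make the primal optimum an eigenvector of $\A-\lambda_1\B$ at its minimal eigenvalue $\lambda_2$, the DTFT turns this into Mathieu's equation with $a=4\lambda_2$, $q=-2\lambda_1$, and minimality of $\lambda_2$ (smallest characteristic value $a_0$) forces the order-zero cosine solution $\ce_0$. The only cosmetic differences are that you invoke Slater's condition on the primal while the paper establishes strong duality via strict feasibility of the dual (Gershgorin), and you offer a nodeless-ground-state argument where the paper simply cites the ordering of Mathieu's characteristic numbers.
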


Mathieu's functions---widely used in quantum mechanics \cite{um2002}---are the solutions to \emph{Mathieu's differential equation} \cite[\S20.1.1]{abramowitz}:
\begin{equation}
\label{eq:mathieueq}
\frac{\partial^2 y(\omega)}{\partial \omega^{2}}+(a-2q\cos(2\omega))\cdot y(\omega)\Eq 0\,.
\end{equation}
These functions assume an even (Mathieu's cosine function) and odd form (Mathieu's sine function). For some specific pairs $(a, q)$, Mathieu's functions can be restricted to be $2\pi$ periodic. Mathieu's harmonic Cosine functions of order $m$ are thus: 
\begin{align}
\label{eq:mathieufun}
\ce_m(q\:;\:\w) \Eq \ce(a_m(q),q\:;\:\w),\ m\in\mathbb N.
\end{align}
For the proof of Theorem \ref{thm:mathieu} and further insights on Mathieu's functions, we refer the reader to Section \ref{sec:proof-mathieu}. 

Using the constructive method presented in Theorem \ref{thm:SDR}, we
can find the achievable (and tight) uncertainty principle bound for
discrete sequences. This is shown and discussed more in Section
\ref{sec:simulation-results} and Figure \ref{fig:bounds}. However, a
numerically computed boundary may not always be practical, and even
though the numerical solution exactly solves the problem, its accuracy
may be challenged. Therefore, we characterize the asymptotic behavior
of the time-frequency bound:

\begin{theorem}
\label{thm:lower-bound}
If $x_n$ is maximally compact for a given $\Dwp = \sigma^2$,  then 
\begin{equation}
\eta_p = \Dn\ \Dwp \geq \sigma^2 \left(1-\sqrt{\frac{\sigma^2}{1+\sigma^2}}\right)\,.
\label{eq:lower-bound}
\end{equation}

Further, for small values of $\sigma^2$, maximally compact sequences satisfy
\begin{equation}
\eta_p = \Dn\ \Dwp \leq \frac{\sigma^2}{8}\left(\frac{\sqrt{1+\sigma^2}}{\sqrt{1+\sigma^2}-1}-\frac{1}{2}\right)\,.
\label{eq:upper-bound}
\end{equation}
\end{theorem}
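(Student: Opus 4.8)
The two inequalities are proved by separate arguments; I would do the (self‑contained) lower bound first.

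\textbf{Lower bound.} The plan is to certify a lower bound on $\Delta_{n,\text{opt}}^2$ by exhibiting a feasible point of the dual \eqref{eq:dual} and applying weak duality, so that $\Delta_{n,\text{opt}}^2\geq\alpha\lambda_1+\lambda_2$ for every $(\lambda_1,\lambda_2)$ with $\A-\lambda_1\B-\lambda_2\bm I\succeq0$. For $t\in(0,1)$ put $\lambda_1=\tfrac{2t}{1-t^2}>0$, $\lambda_2=-\lambda_1 t$, and assign the weights $t_k=t$ for $k\geq0$ and $t_k=t^{-1}$ for $k<0$. For any finitely supported $\bm v$,
\[
\bm v^{T}\!\left(\A-\lambda_1\B-\lambda_2\bm I\right)\!\bm v=\sum_k(k^2-\lambda_2)v_k^2-\lambda_1\sum_k v_kv_{k+1}\ \geq\ \sum_k\Big[(k^2-\lambda_2)-\tfrac{\lambda_1}{2}\big(t_k+t_{k-1}^{-1}\big)\Big]v_k^2 ,
\]
using $2v_kv_{k+1}\leq t_kv_k^2+t_k^{-1}v_{k+1}^2$. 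A short computation, using $\tfrac{\lambda_1}{2}(t^{-1}-t)=1$, shows the bracket vanishes for $|k|\leq1$ and equals $k^2-1\geq3$ for $|k|\geq2$; hence $\A-\lambda_1\B-\lambda_2\bm I\succeq0$ and $\Delta_{n,\text{opt}}^2\geq\alpha\lambda_1+\lambda_2=\tfrac{2t(\alpha-t)}{1-t^2}$. Maximizing the right‑hand side over $t$ gives the stationarity condition $\alpha(1+t^2)=2t$, so $t=\tfrac{1-\sqrt{1-\alpha^2}}{\alpha}\in(0,\alpha)$ and the value is $\alpha t=1-\sqrt{1-\alpha^2}$. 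Since $\alpha=(1+\sigma^2)^{-1/2}$, one has $\sqrt{1-\alpha^2}=\sqrt{\sigma^2/(1+\sigma^2)}$, and multiplying $\Delta_{n,\text{opt}}^2\geq1-\sqrt{1-\alpha^2}$ by $\Dwp=\sigma^2$ gives \eqref{eq:lower-bound}. (Via the operator identification used for Theorem \ref{thm:mathieu}, this is exactly the bound $a_0(q)\geq4-2\sqrt{q^2+4}$ on the lowest Mathieu characteristic value.)

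\textbf{Upper bound.} Here I would use Theorem \ref{thm:mathieu}. By strong duality for \eqref{eq:SDR} (feasible, Remark \ref{rem:feasible}), $\Delta_{n,\text{opt}}^2$ equals the dual optimum; for fixed $\lambda_1$ the largest admissible $\lambda_2$ in \eqref{eq:dual} is the bottom of the spectrum of $\A-\lambda_1\B$, which---as identified in the proof of Theorem \ref{thm:mathieu}, where $\A-\lambda_1\B$ is the matrix of the Mathieu operator $-\tfrac{d^2}{d\w^2}-\lambda_1\cos\w$---equals $\tfrac14 a_0(2\lambda_1)$. Hence $\Delta_{n,\text{opt}}^2=\tfrac14\sup_{q\geq0}\big[a_0(q)+2\alpha q\big]$, the supremum attained at the $q^\star$ with $a_0'(q^\star)=-2\alpha$; and $q^\star\to\infty$ as $\sigma^2\to0$ (i.e. $\alpha\to1$). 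Therefore, for $\sigma^2$ small enough, $q^\star$ lies in the range where the Mathieu asymptotic expansion \cite[\S20.2]{abramowitz} gives $a_0(q)\leq-2q+2\sqrt q-\tfrac14$, so
\[
\Delta_{n,\text{opt}}^2\ \leq\ \tfrac14\sup_{q\geq0}\Big[-2(1-\alpha)q+2\sqrt q-\tfrac14\Big]\ =\ \tfrac14\Big[\tfrac{1}{2(1-\alpha)}-\tfrac14\Big]\ =\ \tfrac{1}{8(1-\alpha)}-\tfrac1{16},
\]
on maximizing the quadratic in $\sqrt q$. Since $\tfrac1{1-\alpha}=\tfrac{\sqrt{1+\sigma^2}}{\sqrt{1+\sigma^2}-1}$, multiplying by $\Dwp=\sigma^2$ gives \eqref{eq:upper-bound}.

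\textbf{Main difficulty.} The lower bound is an exact duality certificate and needs no asymptotics. The obstacle is the upper bound: it relies on a rigorous one‑sided estimate for the Mathieu characteristic value $a_0(q)$ in the large‑$q$ regime (the terms of the expansion beyond $-\tfrac14$ being negative), together with a quantitative form of ``$q^\star$ large $\Leftrightarrow$ $\sigma^2$ small''---which is exactly what forces the hypothesis that $\sigma^2$ be small. An alternative avoiding Mathieu asymptotics is to use a sampled Gaussian, or the inverse DTFT of a periodized Gaussian, as a primal‑feasible trial sequence and expand its $\Dn$ and $\Dwp$ in theta functions; it yields the same leading behaviour but is heavier.
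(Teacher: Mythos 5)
Your proposal is correct, and it reaches exactly the paper's intermediate bound $\Delta_{n,\text{opt}}^2 \geq 1-\sqrt{1-\alpha^2}$ before multiplying by $\sigma^2$; the difference from the paper lies in how the dual-feasibility certificate for the lower bound is established. The paper proves the sufficient condition $\lambda_2 < 1-\sqrt{1+\lambda_1^2}$ for $\A-\lambda_1\B-\lambda_2\bm{I}\succeq 0$ by splitting the matrix into two semi-infinite tridiagonal blocks, running Gaussian elimination, and showing via an induction lemma on the pivots $s_{k+1}=(k+1)^2-\lambda_2-\lambda_1^2/(4s_k)$ that all principal minors are positive (Sylvester's criterion, with a footnote to justify it for infinite matrices); it then maximizes $\alpha\lambda_1+\lambda_2$ over that region. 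You instead verify positivity of the quadratic form directly by the weighted inequality $2v_kv_{k+1}\leq t_k v_k^2+t_k^{-1}v_{k+1}^2$ with the two-sided weights $t_k=t^{\pm 1}$, which is more elementary and self-contained (no Sylvester-for-operators caveat, no pivot recursion), and your one-parameter family $(\lambda_1,\lambda_2)=\bigl(\tfrac{2t}{1-t^2},-\lambda_1 t\bigr)$ traces exactly the boundary curve $\lambda_2=1-\sqrt{1+\lambda_1^2}$ of the paper's region, so the subsequent optimization over $t$ is equivalent to the paper's optimization of its restricted dual and yields the same value $1-\sqrt{1-\alpha^2}=\alpha t^\star$. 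Your upper-bound argument is essentially the paper's: both identify $\lambda_2=\tfrac14 a_0(2\lambda_1)$ on the boundary of the dual cone, replace $a_0(q)$ by the truncated asymptotic expansion $-2q+2\sqrt q-\tfrac14$ (the paper cites McLachlan, with all discarded terms negative), and maximize the relaxed dual at $\lambda_1=\tfrac{1}{8(1-\alpha)^2}$ to get $\tfrac{1}{8(1-\alpha)}-\tfrac{1}{16}$. The caveat you flag---that one needs the optimizing $q^\star$ to be large enough for the asymptotic inequality, i.e.\ $\sigma^2$ small---is present in the paper too and is precisely why the statement of \eqref{eq:upper-bound} is restricted to small $\sigma^2$, so your treatment is at the same level of rigor as the published proof.
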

\begin{proof}
The proof for this theorem is provided in Section \ref{sec:proof_low_up}.
\end{proof}
This fundamental result states that for a given frequency spread, we
cannot design sequences which achieve the classic Heisenberg uncertainty bound. We will see how this curve compares to the classic Heisenberg bound in Section \ref{sec:simulation-results}. 

The lower bound in \eqref{eq:lower-bound} converges to $1/2$ as the
value of $\sigma^2$ grows, and  ``pushes up'' the time-frequency
spread of maximally compact sequences towards $1/2$ which is also an asymptotic upper
bound on the time-frequency spread as $\Dwp\rightarrow\infty$; indeed, one
may construct the unit-norm
sequence $x^{(\varepsilon)}_n = \varepsilon\ \delta_{n+1}
+\sqrt{1-2\varepsilon^2}\ \delta_n + \varepsilon\ \delta_{n-1}$, which
verifies $\lim_{\varepsilon\rightarrow 0}\eta_{p}(x^{(\varepsilon)})=1/2$.

On the other hand, for small values of $\sigma^2$, the upper bound in
\eqref{eq:upper-bound} converges from above to $1/4$, thus ``pushing down" 
the time-frequency spread of maximally compact sequences towards the
Heisenberg uncertainty bound $1/4$ from above. 

\subsubsection{Finite-Length Sequences}
\label{subset:finite-sequences}
The theory that we have provided so far holds for infinite
sequences. For computational purposes, we have to assume
finite length for the sequences in the time domain, which is not an
issue if the sequence length is chosen to be long enough. As a side benefit,
a length constraint on the sequence may be added without changing the design algorithm. 

\section{Proof of Theorems \ref{thm:SDR} and \ref{thm:mathieu}}

Let us start with some properties of maximally compact sequences.

\subsection{Properties of Maximally Compact Sequences}
\label{subsec:properties-MCS}
In the definitions of time and frequency spreads in Table \ref{tab:TF-disc-2} we considered complex sequences and their DTFTs. In the following, we establish two lemmas that make the search for maximally compact sequences easier. In the first lemma (Lemma \ref{lem:real}), we state that if someone gives us a complex sequence with a given time-frequency spread, we can always take the modulus and have a sequence with a smaller or equal time-frequency spread. This enables us to only consider {\em{real}} sequences as maximally compact sequences. In Lemma \ref{lem:shiftinv}, we state that shifts in time do not affect the time-frequency spread of sequences. This lemma allows us to assume---without loss of generality---that the sequences are centered around zero in time. 
\begin{Lemma}
  \label{lem:real}
  For any given sequence $x_k$,
  $$\eta_p(\abs{x}) \leq \eta_p(x)\,.$$
  
  \end{Lemma}
\begin{proof} 
 Let $x_k$ be a sequence with a given time spread $\Dn(x)$. Obviously,
  $$\Dn(x)\Eq \Dn(|x|).$$
Moreover, 
\begin{align*}
\Dwp(|x|)\Eq&\left|\sum_{k\in\Zspace}|x_k||x_{k+1}|\right|^{-2}-1\nonumber\\
\Leq&\left|\sum_{k\in\Zspace}x_k x_{k+1}^{\ast}\right|^{-2}-1\quad\Eq\Dwp(x).
\end{align*}
\end{proof}
Recall from Lemma \ref{lem:monotone} that for maximally compact sequences $\Dn$ changes monotonically with $\Dwp$. Thus, it is equivalent to fix either one of $\Dn$ or $\Dwp$, and optimize with respect to the other.

\begin{Lemma}
  \label{lem:shiftinv}
If $x$ is a maximally compact sequence, then $x_{k -\mu_n(x)}$ is also maximally compact. For non-integer $\mu_n$,  $x_{k-{\mu_n}}$ is a
shorthand for sinc resampling  on a grid shifted by $\mu_{n}$ in the time domain.

\end{Lemma}
\begin{proof} As $\mu_n$ is not necessarily an integer, we can use the Parseval's equality to show that the time center and spread is not affected by arbitrary shifts. These easy computations are left to the interested reader (They can be found in \cite{parhizkar2013}). Thus, if $x$ is a maximally compact sequence, then $x_{k -\mu_n(x)}$ is also maximally compact (note that time shift does not change the frequency characteristics of the sequence).

\end{proof}
\begin{Remark} Using Lemmas \ref{lem:real} and \ref{lem:shiftinv}, we only consider real sequences $x$, with $\mu_{n}(x)=0$ and $\|x\|_2=1$. Later in Theorem \ref{thm:mathieu} we show that maximally compact sequences are Mathieu's cosine functions, which have strictly positive inverse Fourier transforms. This enables us to state that maximally compact sequences are real sequences up to a shift, scale or modulation. 
\end{Remark}
\subsection{Proof of Theorem \ref{thm:SDR}}
\label{sec:proof-SDR}
By using Lemmas \ref{lem:real} and \ref{lem:shiftinv}, we can write  problem \eqref{eq:opt1} as
\begin{equation}\label{eq:opt}
\begin{aligned}
\Delta_{n,\text{opt}}^2 \ = \ &\underset{x_{n}}{\text{minimize}}
& & \sum_{k\in\Z} k^2 x_k^2 \\
& \text{subject to}
& &  \sum_{k\in\Z} x_k x_{k+1} = \frac{1}{\sqrt{1+\sigma^{2}}},\\
 & & &  \sum_{k\in\Z} x_k^{2} = 1.
\end{aligned}
\end{equation}
We can rewrite \eqref{eq:opt} in matrix form as a quadratically constrained quadratic program (QCQP)~\cite[p. 152]{boyd2004}:
\begin{equation}
\label{eq:matForm}
\begin{aligned}
& \underset{\x}{\text{minimize}}
& & \x^T\A\x \\
& \text{subject to}
& & \x^T\B\x  = \alpha\,,\\
&&& \x^T\x = 1\,,
\end{aligned}
\quad\quad\quad\quad \equiv \quad\quad\quad\quad 
\begin{aligned}
& \underset{\x}{\text{minimize}}
& & \text{tr}(\A\x\x^T) \\
& \text{subject to}
& & \text{tr}(\B\x\x^T)  = \alpha\\
&&& \text{tr}(\x\x^T) = 1\,.
\end{aligned}
\end{equation}
where $\A$ and $\B$ are defined in \eqref{eq:AandB} and $\alpha = 1/\sqrt{1+\sigma^2}$. Replacing $\x\x^T$ by $\X$, we can write equivalently
\begin{equation}
\begin{aligned}
& \underset{\X}{\text{minimize}}
& & \text{tr}(\A\X) \\
& \text{subject to}
& & \text{tr}(\B\X)  = \alpha\\
&&& \text{tr}(\X) = 1\\
&&& \X \succeq 0, \; \text{rank}(\X) = 1\,.
\end{aligned}
\label{eq:SDP3}
\end{equation}
We further relax the above formulation to reach the semi-definite program
\begin{equation}
\begin{aligned}
& \underset{\X}{\text{minimize}}
& & \text{tr}(\A\X) \\
& \text{subject to}
& & \text{tr}(\B\X)  = \alpha\\
&&& \text{tr}(\X) = 1,\ \ \X \succeq 0\,.
\end{aligned}
\label{eq:SDR2}
\end{equation}
In Lemma \ref{lem:tight} we show that the semi-definite relaxation is tight. This finishes the proof for Theorem \ref{thm:SDR}.
\qed
\begin{Lemma}
\label{lem:tight}
The semi-definite relaxation (SDR) from \eqref{eq:SDP3} to \eqref{eq:SDR2} is tight.
\end{Lemma}
\begin{proof}
Shapiro and then Barnivok and Pataki \cite{Shapiro1982,Barvinok1995,Pataki1998,Luo2010} show that if the SDP in \eqref{eq:SDP3} is feasible, then 
\begin{equation}
\text{rank}(\X^{\mathrm{opt}})\leq \lfloor(\sqrt{8m+1}-1)/2\rfloor\,,
\label{eq:rankSDR}
\end{equation}
where $m$ is the number of constraints of the SDP and
$\X^{\mathrm{opt}}$ is its optimal solution. For our semi-definite
program in \eqref{eq:SDR2}, $m = 2$. Thus, \eqref{eq:rankSDR} implies
that the solution has rank 1. Using this fact, one can see that the
semi-definite relaxation is in fact tight. Recall from Remark \ref{rem:feasible} that problem \eqref{eq:SDR} (and also \eqref{eq:SDP3}) is feasible. 
\end{proof}

\subsection{Proof of Theorem \ref{thm:mathieu}}
\label{sec:proof-mathieu}
We start by problem \eqref{eq:SDR} and its dual \eqref{eq:dual}.
\begin{Lemma}
\label{lem:dual}
For the primal problem \eqref{eq:SDR} and the dual \eqref{eq:dual}, strong duality holds.
\end{Lemma}
\begin{proof}
For a semi-definite program and its dual, if the primal is feasible and the dual is strictly feasible, then strong duality holds \cite{todd2001, Trnovska2005} . 

We saw in Remark \ref{rem:feasible} that the primal problem \eqref{eq:SDR}  is feasible. For the dual, one can use the Gershgorin's circle theorem and show that a sufficient condition for $\A - \lambda_1\,\B - \lambda_2\,\bm{I} \succ 0$ to hold is $\lambda_2 < -\lambda_1$ and $\lambda_1 >0$. Thus, the dual problem is strictly feasible.
\end{proof}
Thus, for finding the time-frequency spread of maximally compact sequences, solving the dual problem suffices. If a sequence is a solution to the dual SDP problem \eqref{eq:dual},
the dual constraint is active. Therefore, maximally compact sequences
lie on the boundary of the quadratic cone
$$ \A - \lambda_1\,\B - \lambda_2\,\bm{I} \succeq 0.$$
A maximally compact sequence $\x$ is thus solution of the eigenvalue problem
\begin{equation}
\label{eq:evboundary}
(\A - \lambda_1\,\B)\x \Eq \lambda_2\,\x,
\end{equation}
where $\lambda_{1}$ and $\lambda_{2}$ are the dual
variables of the SDP problem. $\lambda_2$ is also the minimal eigenvalue of $\A -
\lambda_1\,\B$ with $\x$ the associated eigenvector (this can be also seen by forcing the derivative of the Lagrangian in \eqref{eq:matForm} to zero).

This explicit link between the dual variables and the sequence,
yields a differential equation for which the DTFT spectrum of
maximally compact sequences is the solution. In the DTFT domain
\eqref{eq:evboundary} becomes (expanding the matrix multiplications)
\begin{align}
X^{\prime\prime}(e^{j\w})+\left(\lambda_2+\lambda_1\cos(\w)\right)X(e^{j\w})\Eq 0\:, \label{eq:mathieudiff}
\end{align}
which is Mathieu's differential equation \eqref{eq:mathieueq}. Taking into account the periodicity of \eqref{eq:mathieudiff}, it
appears that not all pairs of parameters $(a,q)$ will lead to a periodic
solution. Mathieu's functions can be restricted to be $2\pi$ periodic. The solutions of Mathieu's harmonic differential equation---equation
  \eqref{eq:mathieueq} with a $2\pi$-periodic solution $y$---are defined as
\begin{align}
\label{eq:mathieufun}
\textrm{Mathieu's harmonic Cosine (even, periodic) }\quad& \ce_m(q\:;\:\w)
\Eq \ce(a_m(q),q\:;\:\w)\ ,\ m\in\mathbb N.\\
\textrm{Mathieu's harmonic Sine (odd, periodic) }\quad&
\se_m(q\:;\:\w) \Eq \ce(b_m(q),q\:;\:\w)\ ,\ m\in\mathbb N^{+}.
\end{align}

It is immediately visible that the spectrum of maximally compact sequences
may only have  the form
\begin{equation}
\label{eq:mathieugen}
X(e^{j\w}) \Eq \begin{cases} \gamma_{0}\,\ce_{m}(-2\lambda_1\:;\:\w/2)+
\gamma_1\,\se_{m}(-2\lambda_1\:;\:\w/2)&\text{ for } m\in \mathbb N^{+},\\
\gamma_{0}\,\ce_{m}(-2\lambda_1\:;\:\w/2)&\text{ for } m=0\:,
\end{cases}
\end{equation}
for any constants $\gamma_0$ and $\gamma_1$ such that
$\|X(e^{j\w})\|=2\pi$. More specifically, for any $\lambda_{1}\geq 0$, the dual SDP problem can
be posed and any solution would have the form \eqref{eq:mathieugen}.

Charasteristic numbers of Mathieu's equation are ordered
\cite[p. 113]{bateman}, such that for $\lambda_1 >0$,
$$\begin{array}{rl} a_{0}(-2\lambda_1)< a_{1}(-2\lambda_1)<b_{1}(-2\lambda_1) <
b_{2}(-2\lambda_1)<a_{2}(-2\lambda_1)< \cdots\,.\end{array}$$

By (\ref{eq:mathieueq}) and with the substitution $\w\rightarrow \w/2$
one obtains $a_{m}(-2\lambda_1)=4\lambda_{2}$. Because $\lambda_{2}$ is the
minimal eigenvalue, we conclude that $m=0$.

Note that this result validates the one in \cite{SongGoh2002} which
stated that asymptotically Mathieu's functions minimize the
time-frequency product. With a diferent approach, we can establish
that only Mathieu's harmonic cosine of order 0 minimizes this product
for any given frequency-spread.

\section{Proof of Theorem \ref{thm:lower-bound}}
\label{sec:proof_low_up}
Let us start by proving the lower bound. 
\subsection{Lower Bound \eqref{eq:lower-bound}}
\label{sec:proof-lower-bound}
In order to prove the lower bound \eqref{eq:lower-bound} we first provide the following two lemmas.

\begin{Lemma}
\label{lem:ind}
Consider the sequence $a_k$ as follows with $\theta, \nu >0$
\begin{equation}
\label{eq:indfor}
a_{k+1} = (k+1)^2 + \theta - \frac{\nu}{a_k}\,.
\end{equation}
The sequence $a_k$ is positive for $k \geq k_0$ as soon as $a_{k_0+1} \geq a_{k_0} > 0$.
\end{Lemma}
\begin{proof}
Note that
\begin{equation}
\begin{aligned}
a_{k+2} - a_{k+1} &= 3+ 2k + \nu \frac{a_{k+1} - a_k}{a_{k+1}a_k}\,.
\end{aligned}
\end{equation}
Thus, by induction, the sequence is positive as soon as $a_{k_0+1} \geq a_{k_0} > 0$ for some $k_0$.
\end{proof}

\begin{Lemma}
If 
\begin{equation}
\lambda_2 < 1-\sqrt{1+\lambda_1^2}\,,
\label{eq:posCond}
\end{equation}
then $\P = \A - \lambda_1\,\B - \lambda_2\,\bm{I} \succeq 0$. 
\end{Lemma}
\begin{proof}
Note that if $\lambda_1 =0$, then the matrix $\P$ is positive-definite with the given condition. In the proof we will thus assume that $\lambda_1 \neq 0$. 

Consider the following tri-diagonal matrices $\P_1$ and $\P_2$:
\begin{equation}
\P_1 = \begin{bmatrix}[1.5]
1-\lambda_2	&	-\lambda_1/2	&				&	\bm{0}	\\
-\lambda_1/2	&	4 - \lambda_2	&	-\lambda_1/2	&			\\
			&	-\lambda_1/2	&	9 - \lambda_2	&	\ddots 	\\
\bm{0} 		&				&	\ddots		& \ddots
\end{bmatrix}, \, \quad \P_2 = \begin{bmatrix}[1.5]
0-\lambda_2	&	-\lambda_1	&				&	 \bm{0} \\
-\lambda_1/2	&	1-\lambda_2	&	-\lambda_1/2	&		\\
			&	-\lambda_1/2	&	4 - \lambda_2	&	\ddots\\
\bm{0}		&				&	\ddots		&	\ddots
\end{bmatrix}\,.
\end{equation}
Call $\mathcal{I}$ the set of eigenvalues of $\P$, $\mathcal{I}_1$ the
set of eigenvalues of $\P_1$ and $\mathcal{I}_2$  the set of
eigenvalues of $\P_2$. It is trivial to see that $\mathcal{I} = \mathcal{I}_1 \cup \mathcal{I}_2$.

We show that if condition \eqref{eq:posCond} is satisfied, then both $\P_1$ and $\P_2$ have positive eigenvalues.

\begin{enumerate}
\item 
$\P_1$: Sylvester's criterion states that a symmetric matrix is positive definite if and only if its principal minors are all positive\footnote{For an infinite matrix that can be regarded as the matrix of a bounded operator in $\ell^2$, one can show that Sylvester's criterion still holds, i.e., the quadratic form is non-negative definite if and only if all the finite principal minors of the matrix are positive. Note that we consider $\norm{x}_2 = 1$, thus $x \in \ell^2$.}. As $\P_1$ is symmetric, we can use Sylvester's criterion on it. We use Gaussian elimination on the matrix $\P_1$ to compute its principal minors
\begin{equation}
{\P}_1^{\text{U}} = \begin{bmatrix}[1.5]
1-\lambda_2	&	-\lambda_1/2	&				&				&	\bm{0}	\\
0			&		s_2		&	-\lambda_1/2	&				&		\\
			&		0		&	s_3			&	-\lambda_1/2	&		\\
\bm{0}			&				&	0			&	\ddots		&	\ddots 	
\end{bmatrix}\,,
\end{equation}
where
\begin{equation*}
s_{k+1} = (k+1)^2 - \lambda_2 - \frac{\lambda_1^2}{4s_k}\,, \quad k \geq 1\,.
\end{equation*}
This satisfies the induction formula in \eqref{eq:indfor}. Thus according to Lemma \ref{lem:ind}, the principal minors of ${\P}_1^{\text{U}}$ (and so $\P_1$) are positive as soon as $s_2 \geq s_1 > 0$. This is equivalent to $1-\lambda_2 > 0$ and $4- \lambda_2 - \frac{\lambda_1^2}{4(1-\lambda_2)} \geq 1-\lambda_2$. i.e., $\lambda_1^2 \leq 12(1-\lambda_2)$, which is a weaker condition than \eqref{eq:posCond}.
Therefore, $\P_1$ is also positive semi-definite.
$$\:$$
\item $\P_2$: We can decompose $\P_2$ as
\begin{equation}
\P_2 = \begin{bmatrix}
2			&			&				&	& \bm{0}\\
			&	1		&				&	& 			\\
			&			&	1			&	& 			\\
\bm{0}	&			&	 			&	&  \ddots\\
\end{bmatrix} \times \begin{bmatrix}[1.5]
0-\lambda_2/2	&	-\lambda_1/2	&				&	 \bm{0} \\
-\lambda_1/2	&	1-\lambda_2	&	-\lambda_1/2	&		\\
			&	-\lambda_1/2	&	4 - \lambda_2	&	\ddots\\
\bm{0}		&				&	\ddots		&	\ddots	
\end{bmatrix}\ = \ \D \times \P_2^\text{(s)} \ .
\end{equation}
Note that both $\D$ and $\P_2^\text{(s)}$ are symmetric. Also observe that the eigenvalues of $\P_2$ and $\P_2^{\text{(s)}}$ are equal. 
Thus, it suffices to consider the eigenvalues of
$\P_2^\text{(s)}$; If $\P_2^\text{(s)}$ is positive-definite then all the eigenvalues of
$\P_2$ are positive. 
%

Again using Gaussian elimination on $\P_2^\text{(s)}$ results in
\begin{equation}
\P_2^{\text{(s),U}} = \begin{bmatrix}[1.5]
-\lambda_2/2	&	-\lambda_1/2	&		0		&				&	\bm{0}	\\
0			&	s_1			&	-\lambda_1/2	&				&		\\
			&	0			&	s_2			&	-\lambda_1/2	&		\\
\bm{0}		&				&	0			&	\ddots		&	\ddots 	
\end{bmatrix}\,,
\end{equation}
where $s_1 = 1 - \lambda_2 +\lambda_1^2/2\lambda_2$ and $s_k$ has the following form $s_{k+1} = (k+1)^2 - \lambda_2 - \frac{\lambda_1^2}{4s_k}\,, \quad k \geq 1$. This again satisfies the induction formula \eqref{eq:indfor}. Therefore, it suffices to show that $\lambda_2 < 0$ and $s_2 \geq s_1 > 0$. 
We show that under condition  \eqref{eq:posCond}, this is true. In order to satisfy $s_1 > 0$, we need  $1- \lambda_2 + \lambda_1^2/2\lambda_2 >0$, which is equivalent to $\lambda_2 < \frac{1}{2}(1-\sqrt{1+2\lambda_2^2})$. This is a weaker condition than \eqref{eq:posCond}. 

Further, in order to satisfy $s_2 > s_1$, we need
$$s_2 - s_1 = 3 + \frac{\lambda_1^2}{4s_1 \lambda_2^2}(\lambda_2^2 - 2\lambda_2-\lambda_1^2) \geq 0\,.$$ 
Thus, it is enough to have $\lambda_2^2 - 2\lambda_2-\lambda_1^2 \geq 0$. That is $\lambda_2 \leq 1-\sqrt{1+\lambda_1^2}$, which is the bound provided in \eqref{eq:posCond}. Putting these together, we can conclude that under condition \eqref{eq:posCond}, the matrix $\P$ is positive-definite. 
\end{enumerate}
\end{proof}
Note that condition \eqref{eq:posCond} gives a sufficient (but not
necessary) condition on the feasible set of the dual problem
\eqref{eq:dual}. Thus, it provides a lower bound for the maximum value of the dual. 
Consider the restricted dual problem 
\begin{equation}
\begin{aligned}
& \underset{\lambda_1, \lambda_2}{\text{maximize}}
& & \alpha\,\lambda_1 + \lambda_2 \\
& \text{subject to}
& & \lambda_2 < 1-\sqrt{1+\lambda_1^2}
\end{aligned}
\label{eq:dualres}
\end{equation}

The solution to this problem is simply $1-\sqrt{1-\alpha^2}$. If we rewrite $\alpha$ in terms of $\sigma^2$, we finally have $$\Delta_{n,\text{opt}}^2 \geq 1- \sqrt{\frac{\sigma^2}{1+\sigma^2}}\,.$$
This concludes the proof for the lower bound.

\subsection{Upper Bound \eqref{eq:upper-bound}}
\label{sec:proof-upper-bound}

It is easy to see that for small values of $\sigma^2$ (equivalently $\alpha$ closer to $1$), the maximum of the dual problem is achieved for large values of $\lambda_1$ (remember that $\lambda_1$ needs to be positive). We saw in \ref{sec:proof-mathieu} that for maximally compact sequences (i.e.~sequences that result in the maximum of the dual problem) we have $\lambda_2 = 1/4\ a_0(2\lambda_1)$ (note that $a_0(-q) = a_0(q)$). McLachlan in \cite{McLachlan64} shows that for large enough values of $q$, we have
\begin{equation}
\begin{aligned}
a_0(q) & = -2q + 2 q^{\frac{1}{2}} - \frac{1}{4} - \frac{1}{32} q^{\frac{-1}{2}}  - \frac{48}{2^7}q^{-1} - \frac{848}{2^{17}} q^{\frac{-3}{2}} - \frac{4,752}{2^{20}} q^{-2}- \frac{126,752}{2^{20}}q^{\frac{-5}{2}} - \cdots\\
& = -2q + 2 q^{\frac{1}{2}} - \frac{1}{4} - \frac{1}{32} q^{\frac{-1}{2}} + O(q^{-1})\,.
\end{aligned}
\end{equation}
Thus, we have for large $q$,
\begin{equation}
\label{eq:a0-upper}
a_0(q) \leq -2q + 2 q^{\frac{1}{2}} - \frac{1}{4}\,.
\end{equation}

\begin{figure}[tb]
\centering
\includegraphics[width=0.75\linewidth]{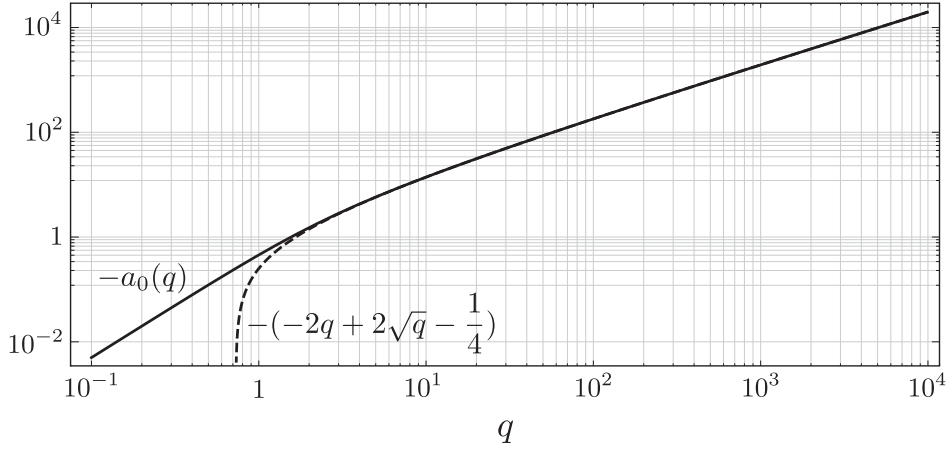}
\caption{The characteristic number $a_0(q)$ of Mathieu's function together with its upper bound. In order to plot on a log scale, the negations are plotted.}
\label{fig:charFig}
\end{figure}

The values for $a_0(q)$ and its upper bound are shown in Figure \ref{fig:charFig}. In order to be able to plot on a log scale, their negative counterparts are plotted. 
 
After replacing $q$ by $2\lambda_1$ and $a_0(2\lambda)$ by $4\lambda_2$ in \eqref{eq:a0-upper}, we have
\begin{equation}
\lambda_2  = \frac{1}{4} a_0(2\lambda_1) \leq -\lambda_1 + \frac{1}{\sqrt{2}}\sqrt{\lambda_1}-\frac{1}{16}\,.
\end{equation}
Because this set contains the original feasible set of the dual problem, it will give an upper bound on the optimal value of the dual. In other words, $\Dn \leq \sigma_r^2$, where 
\begin{equation}
\begin{aligned}
\sigma_r^2 \Eq & \underset{\lambda_1, \lambda_2}{\text{maximize}}
& & \alpha\,\lambda_1 + \lambda_2 \\
& \text{subject to}
& & \lambda_2 \leq -\lambda_1 + \frac{1}{\sqrt{2}}\sqrt{\lambda_1}-\frac{1}{16}\,.
\end{aligned}
\label{eq:dualrelaxed}
\end{equation}

It is easy to see that the maximum of \eqref{eq:dualrelaxed} is achieved for $\lambda_1 = \frac{1}{8(1-\alpha)^2}$. Replacing $\lambda_1$ in \eqref{eq:dualrelaxed} and using the fact that $\Delta_{n, \text{opt}}^2 \leq \sigma_r^2$, leads to 
\begin{equation}
\Delta_{n, \text{opt}}^2 \leq \frac{1}{8}\left(\frac{\sqrt{1+\sigma^2}}{\sqrt{1+\sigma^2}-1}-\frac{1}{2}\right)\,.
\end{equation}

\section{Simulation Results}
\label{sec:simulation-results}

In order to show the behaviour of the results obtained in Theorems \ref{thm:SDR} and \ref{thm:lower-bound}, we ran some simulations. For this, we assumed that the designed filter is finite length with 201 taps in the time domain. The length is long enough not to pose restrictions on the solution for the considered frequency spreads. For smaller frequency spreads, we can increase the length of the sequence. For different values of $\Dwp \Eq \sigma^2$, we solved the semi-definite program \eqref{eq:SDR} using the \textsf{cvx} toolbox in MATLAB. 

The resulting values of $\Dn$ were then multiplied with the
corresponding $\Dwp$ to produce the time-frequency spread of maximally
compact sequences. The time-frequency spread of maximally compact
sequences versus their frequency spread is shown with the solid curve
in Figure \ref{fig:bounds}. This means---numerically---that any
time-frequency spread under this curve is not achievable. The dotted
line in this figure shows the classic Heisenberg uncertainty
bound. Comparing the two curves shows the gap between the classic
Heisenberg principle and what is achievable in practice. The dashed
lines represent analytical lower and upper bounds for the
time-frequency spread of maximally compact sequences (found in Theorem \ref{thm:lower-bound}).

\begin{figure}[tb]
\centering
\scalebox{.9}{
    \footnotesize
    \psfrag{i}[Bl][Bl]{Infeasible TF region (analytically proven)}
    \psfrag{I}[Bl][Bl]{Infeasible TF region (numerically proven)}
    \psfrag{s}[Br][Br]{Infimum TF spread by tight SDR}
    \psfrag{l}[Br][Br]{Analytic lower bound}
    \psfrag{u}[Bl][Bl]{Analytic upper bound}
    \psfrag{h}[Br][Br]{Heisenberg lower bound}
    \psfrag{f}[Bc][Bc]{\small{$\Dwp$}}
    \psfrag{t}[Bc][Bc]{\small{$\etap\:=\:\Dwp\times\Dn$}}
\psfragfig[width=1\textwidth]{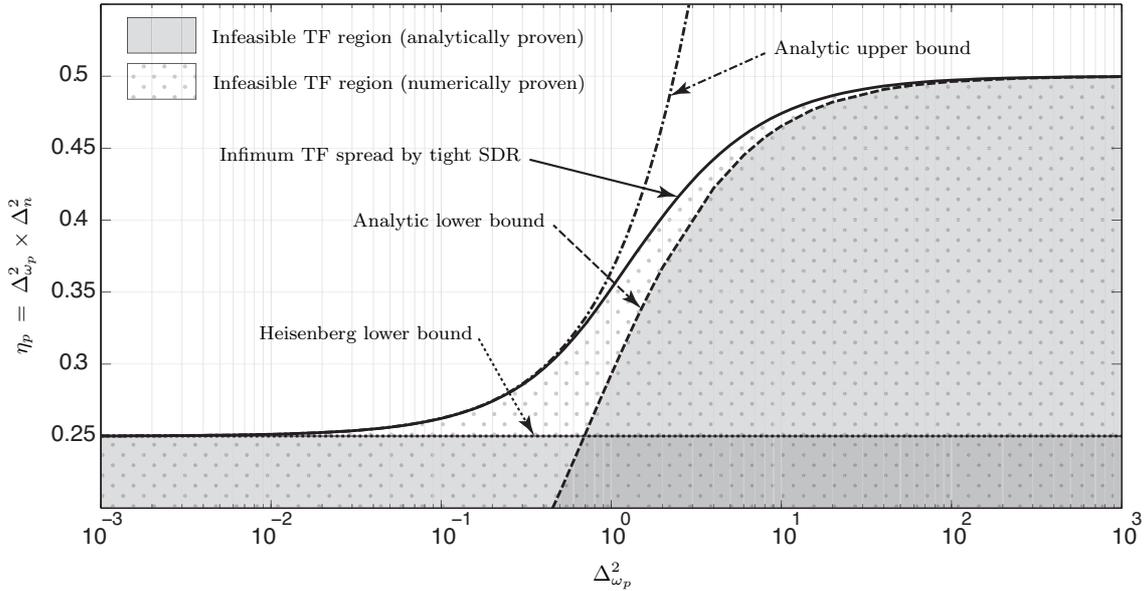}}
\caption{\textbf{New uncertainty bounds}. The solid line shows the results of solving the SDP in \eqref{eq:SDR}. The dotted line shows the classic Heisenberg uncertainty principle. The dashed lines show the analytic lower and upper bounds found in Theorem \ref{thm:lower-bound}.}
\label{fig:bounds}
\end{figure}
Further, to give an insight on how the time-frequency spread of some
common filters compare to that of maximally compact sequences, we plot
their time-frequency spread together with the new uncertainty bound in
Figure \ref{fig:windows}. By changing the length of each filter in
time, we can find its time and frequency spreads which results in a
point on the figure. We observe that as shown by Prestin et al.~in
\cite{Prestin2003}, asymptotically when the frequency spread of
sequences are very small, sampled Gaussians converge to the lower bound for maximally compact sequences. 

\begin{figure}[tb]
\centering
\includegraphics[width = 0.9\linewidth]{./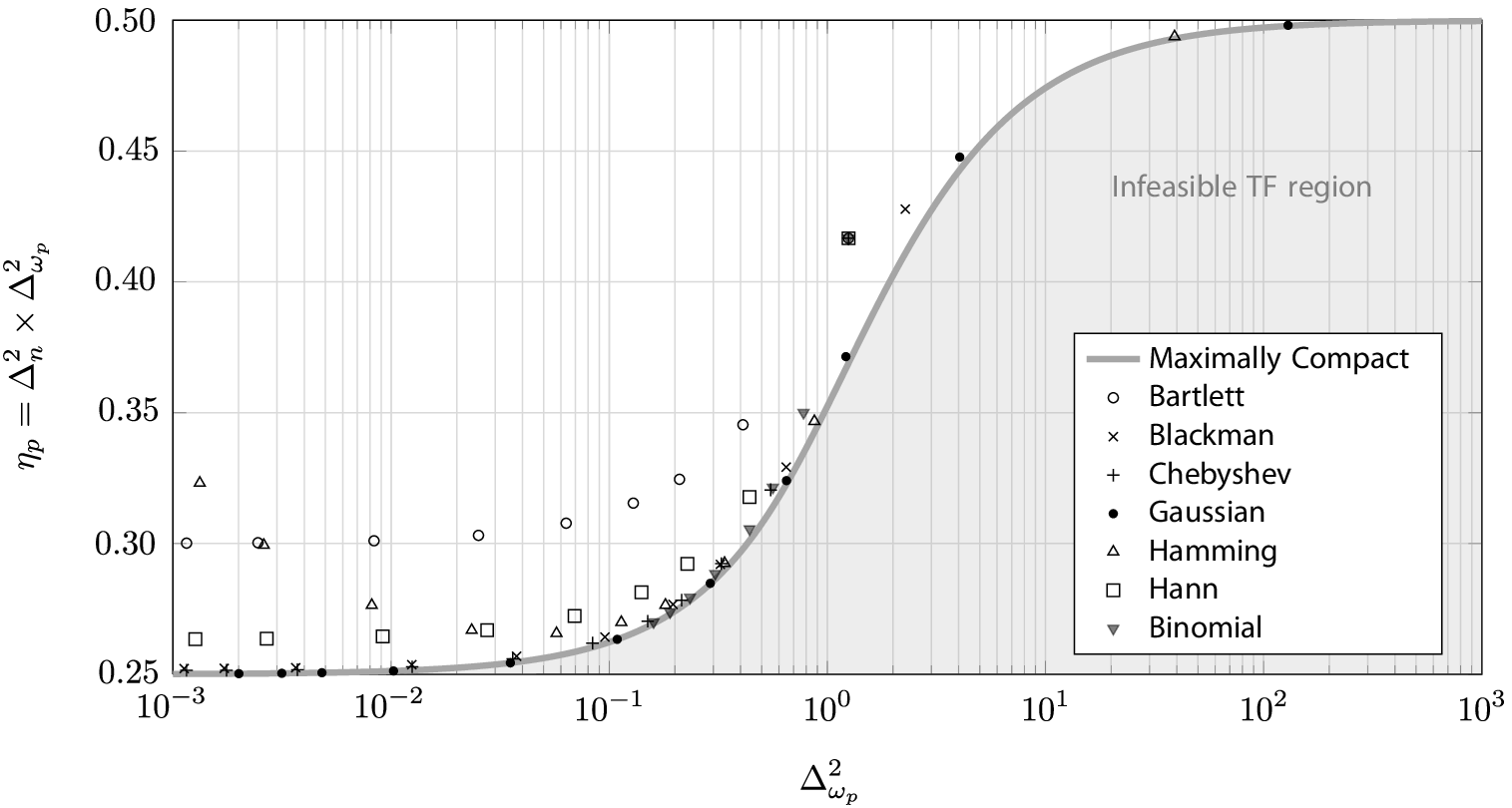}
\caption{\textbf{Time-frequency spread of common FIR filters.} By changing the length of the filters in time, we compute the time and frequency spreads for each type of filters. For small values of the frequency spread, Gaussian filters are good approximations of Mathieu's functions (as shown also in \cite{Prestin2003}).}
\label{fig:windows}
  \end{figure}

\section{Conclusion}
\label{sec:conclusion}

We showed that for discrete-time sequences, contrary to continuous-time signals, the classic Heisenberg uncertainty bound is not always achievable and the uncertainty minimizers have a large gap from this bound. We constructed an optimization framework for finding the uncertainty minimizers which we refer to as the ``maximally compact sequences''. This framework allows one to find---numerically and efficiently---the most compact sequence in time with a desired frequency spread. We further showed that the discrete-time Fourier transform of these sequences is a very special class of Mathieu's functions. We also proved analytic bounds on the time-frequency spread of discrete sequences. Maximally compact sequences can serve as optimal probing windows (similarly to the ones shown in Figure \ref{fig:windows}) in several signal processing applications. Furthermore, the connection provided between the solutions of the semi-definite program and Mathieu's functions, also enables the approximation of Mathieu's functions with arbitrary accuracy. 

\section*{Ackowledgements}
 The authors would like to thank Dr.~Arash Amini and Dr.~Seyed Hamed Hassani for their help in
 proving Theorem \ref{thm:lower-bound}, and Ivan Dokmanic for his
 insight on Mathieu's functions. The authors would also like to thank the reviewers for their valuable contributions in the improvement of the manuscript, and pointing out a related recent work \cite{Nam2013}, which was published on arxiv after the submission of this manuscript and during the review process.

\bibliographystyle{alpha}
\bibliography{references_ACHA}

\newcommand{\etalchar}[1]{$^{#1}$}
\newcommand{\SortNoop}[1]{}
\begin{thebibliography}{EMO{\etalchar{+}}55}

\bibitem[AL12]{agaskar2012}
A.~Agaskar and Y.M. Lu.
\newblock Uncertainty principles for signals defined on graphs: Bounds and
  characterizations.
\newblock In {\em ICASSP}, pages 3493--3496, 2012.

\bibitem[AS72]{abramowitz}
M.~Abramowitz and I.A. Stegun.
\newblock {\em Handbook of Mathematical Functions with Formulas, Graphs, and
  Mathematical Tables. National Bureau of Standards Applied Mathematics Series
  55. Tenth Printing.}
\newblock ERIC, 1972.

\bibitem[Bar95]{Barvinok1995}
A.I. Barvinok.
\newblock Problems of distance geometry and convex properties of quadratic
  maps.
\newblock {\em Discrete \& Computational Geometry}, 13(1):189--202, 1995.

\bibitem[Bre85]{Breitenberger1985}
E.~Breitenberger.
\newblock Uncertainty measures and uncertainty relations for angle observables.
\newblock {\em Foundations of Physics}, 15(3):353--364, 1985.

\bibitem[BV04]{boyd2004}
S.~Boyd and L.~Vandenberghe.
\newblock {\em Convex optimization}.
\newblock Cambridge University Press, 2004.

\bibitem[DS89]{Donoho1989}
D.~L. Donoho and P.~B. Stark.
\newblock Uncertainty principles and signal recovery.
\newblock {\em SIAM J. Appl. Math.}, 49(3):906--931, 1989.

\bibitem[EMO{\etalchar{+}}55]{bateman}
A.~Erd{\'e}lyi, W.~Magnus, F.~Oberhettinger, F.~G. Tricomi, and H.~Bateman.
\newblock {\em Higher transcendental functions}, volume~3.
\newblock McGraw-Hill New York, 1955.

\bibitem[Erb09]{erb09}
W.~Erb.
\newblock {\em Uncertainty Principles on Riemannian Manifolds}.
\newblock PhD thesis, Tech. Universit\"{a}t M\"{u}nchen, 2009.

\bibitem[Gab46]{gabor46}
D.~Gabor.
\newblock Theory of communication.
\newblock {\em Journal of the Institution of Electrical Engineers},
  93(26):429--457, 1946.

\bibitem[GB10]{Grant2010}
M.~Grant and S.~Boyd.
\newblock cvx: {MATLAB} software for disciplined convex programming, version
  1.21.
\newblock http://cvxr.com/cvx, 2010.

\bibitem[GJ11]{Ghobber2011}
S.~Ghobber and P.~Jaming.
\newblock On uncertainty principles in the finite dimensional setting.
\newblock {\em Linear Algebra and its Applications}, 435(4):751--768, 2011.

\bibitem[GM02]{SongGoh2002}
S.~S. Goh and C.~A. Micchelli.
\newblock Uncertainty principles in {H}ilbert spaces.
\newblock {\em J. Fourier Anal. Appl.}, 8:335--374, 2002.

\bibitem[Hei27]{heisenberg1927}
W.~Heisenberg.
\newblock The actual content of quantum theoretical kinematics and mechanics.
\newblock {\em Phys. Z.}, 43:172, 1927.

\bibitem[KDSK12]{Khalid2012}
Z.~Khalid, S.~Durrani, P.~Sadeghi, and R.A. Kennedy.
\newblock Concentration uncertainty principles for signals on the unit sphere.
\newblock In {\em ICASSP}, pages 3717 --3720, march 2012.

\bibitem[KPR08]{Krahmer2008}
F.~Krahmer, G.~E. Pfander, and P.~Rashkov.
\newblock Uncertainty in time--frequency representations on finite {A}belian
  groups and applications.
\newblock {\em Applied and Computational Harmonic Analysis}, 25(2):209 -- 225,
  2008.

\bibitem[LMS{\etalchar{+}}10]{Luo2010}
Z.~Luo, W.~Ma, A.~M.~C. So, Y.~Ye, and S.~Zhang.
\newblock Semidefinite relaxation of quadratic optimization problems.
\newblock {\em IEEE Signal Processing Magazine}, 27(3):20--34, 2010.

\bibitem[Mac64]{McLachlan64}
N.~W. MacLachlan.
\newblock {\em Theory and Applications of Mathieu Functions}, page 232.
\newblock Dover Publications, 1964.

\bibitem[MJ09]{directionalMardia}
K.~V. Mardia and P.~E. Jupp.
\newblock {\em Directional Statistics}, pages 20--31.
\newblock Wiley, 2009.

\bibitem[MS73]{Matolcsi1973}
T.~Matolcsi and J.~Sz\H{u}cs.
\newblock Intersection des mesures spectrales conjugu\'ees.
\newblock {\em R. Acad. Sci.}, 277:A841--A843, 1973.

\bibitem[MS08]{Massar2008}
S.~Massar and Ph. Spindel.
\newblock Uncertainty relation for the discrete {F}ourier transform.
\newblock {\em Phys. Rev. Lett.}, 100:190401, May 2008.

\bibitem[Nam13]{Nam2013}
S.~Nam.
\newblock An uncertainty principle for discrete signals.
\newblock {\em CoRR}, abs/1307.6321, 2013.

\bibitem[Par13]{parhizkar2013}
R.~Parhizkar.
\newblock {\em Euclidean Distance Matrices: Properties, Algorithms and
  Applications}.
\newblock PhD thesis, Ecole Polytechnique Federale de Lausanne (EPFL), 2013.

\bibitem[Pat98]{Pataki1998}
G.~Pataki.
\newblock On the rank of extreme matrices in semidefinite programs and the
  multiplicity of optimal eigenvalues.
\newblock {\em Mathematics of Operations Research}, pages 339--358, 1998.

\bibitem[PDO99]{Przebinda1999}
T.~Przebinda, V.~DeBrunner, and M.~Ozaydin.
\newblock Using a new uncertainty measure to determine optimal bases for signal
  representations.
\newblock In {\em ICASSP}, 1999.

\bibitem[PQ99]{Prestin1999}
J.~Prestin and E.~Quak.
\newblock Optimal functions for a periodic uncertainty principle and
  multiresolution analysis.
\newblock {\em Proceedings of the Edinburgh Mathematical Society},
  42(2):225--242, 1999.

\bibitem[PQRS03]{Prestin2003}
J.~Prestin, E.~Quak, H.~Rauhut, and K.~Selig.
\newblock On the connection of uncertainty principles for functions on the
  circle and on the real line.
\newblock {\em J. Fourier Anal. Appl.}, 9(4):387--409, 2003.

\bibitem[Rob29]{robertson1929}
H.~P. Robertson.
\newblock The uncertainty principle.
\newblock {\em Physical Review}, 34:163--164, 1929.

\bibitem[Sch30]{Schrodinger1930}
E.~Schr{\"o}dinger.
\newblock About {H}eisenberg uncertainty relation.
\newblock {\em Proc. Prussian Acad. Sci.}, XIX:296--303, 1930.

\bibitem[Sha82]{Shapiro1982}
A.~Shapiro.
\newblock Rank-reducibility of a symmetric matrix and sampling theory of
  minimum trace factor analysis.
\newblock {\em Psychometrika}, 47(2):187--199, 1982.

\bibitem[Sle78]{slepian78}
D.~Slepian.
\newblock Prolate spheroidal wave functions, {F}ourier analysis, and
  uncertainty. {V}:{T}he discrete case.
\newblock 57(5):1371--1430, 1978.

\bibitem[Tod01]{todd2001}
M.~J. Todd.
\newblock Semidefinite optimization.
\newblock {\em Acta Numerica}, 10:515--560, 2001.

\bibitem[Trn05]{Trnovska2005}
M.~Trnovsk\'a.
\newblock Strong duality conditions in semidefinite programming.
\newblock {\em J. of Elec. Eng.}, 56(12), 2005.

\bibitem[UCYG02]{um2002}
C.~I. Um, J.~R. Choi, K.~H. Yeon, and T.~F. George.
\newblock Exact quantum theory of the harmonic oscillator with the classical
  solution in the form of mathieu functions.
\newblock {\em Journal of the Korean Physical Society}, 40(6):969--973, 2002.

\bibitem[VKG13]{vetterli}
M.~Vetterli, J.~Kovacevic, and V.~K. Goyal.
\newblock {\em Foundations of Signal Processing}, chapter~7.
\newblock 2013.
\newblock http://www.fourierandwavelets.org/.

\bibitem[vM18]{vonMises1918}
R.~von Mises.
\newblock {\"U}ber die ``{G}anzzahligkeit'' der {A}tomgewichte und verwandte
  {F}ragen.
\newblock {\em Phys. Z.}, 19:490--500, 1918.

\end{thebibliography}
\end{document}